\numberwithin{equation}{section}
\newcommand{\snr}{\frac{\sigma^2_s}{\sigma^2_\epsilon + \sigma^2_s}}
\newcommand{\stdX}{\dot{\mathbf{X}}}
\newcommand{\rotX}{\tilde{\mathbf{X}}}
\newcommand{\roty}{\tilde{\mathbf{y}}}
\providecommand{\bothvar}{\sigma^2_s + \sigma^2_\epsilon}
\newcommand{\train}{_{-k}}
\newcommand{\test}{_{k}}
\providecommand{\Inv}{^{-1}}
\providecommand{\D}{\mathbf{D}}
\providecommand{\I}{\mathbf{I}}
\providecommand{\K}{\mathbf{K}}
\providecommand{\Q}{\mathbf{Q}}
\renewcommand{\S}{\mathbf{S}}
\providecommand{\U}{\mathbf{U}}
\providecommand{\V}{\mathbf{V}}
\providecommand{\W}{\mathbf{W}}
\providecommand{\X}{\mathbf{X}}
\providecommand{\Z}{\mathbf{Z}}
\providecommand{\x}{}
\renewcommand{\x}{\mathbf{x}}
\newcommand{\y}{\mathbf{y}}
\let\origc\c
\DeclareRobustCommand\c{\ifmmode\mathbf{c}\else\expandafter\origc\fi}
\let\origd\d
\DeclareRobustCommand\d{\ifmmode\mathbf{d}\else\expandafter\origd\fi}
\let\origu\u
\DeclareRobustCommand\u{\ifmmode\mathbf{u}\else\expandafter\origu\fi}
\let\origd\v
\DeclareRobustCommand\v{\ifmmode\mathbf{v}\else\expandafter\origv\fi}
\providecommand{\byy}{\tilde{\mathbf{y}}}
\providecommand{\cL}{\mathcal{L}}
\providecommand{\bb}{\boldsymbol{\beta}}
\providecommand{\bh}{\widehat{\beta}}
\providecommand{\bbh}{\widehat{\boldsymbol{\beta}}}
\providecommand{\bep}{\boldsymbol{\epsilon}}
\providecommand{\bg}{\boldsymbol{\gamma}}
\providecommand{\bS}{\boldsymbol{\Sigma}}
\providecommand{\Var}{\mathbb{V}}
\providecommand{\Tr}{^{\scriptscriptstyle\top}}
\providecommand{\Inv}{^{\scriptscriptstyle-1}}
\providecommand{\one}{\mathbf{1}}
\providecommand{\zero}{\mathbf{0}}
\newtheorem*{definition*}{Definition}
\newtheorem{theorem}{Theorem}
\newtheorem*{theorem*}{Theorem}
\newtheorem*{prop*}{Proposition}
\newtheorem{lemma}{Lemma}
\newtheorem*{lemma*}{Lemma}
\newtheorem*{corollary*}{Corollary}
\newtheorem*{conjecture*}{Conjecture}
\newcommand{\leqnomode}{\tagsleft@true\let\veqno\@@leqno}
\newcommand{\reqnomode}{\tagsleft@false\let\veqno\@@eqno}
\title{Cross-Validation in Penalized Linear Mixed Models: Addressing Common Implementation Pitfalls}
\author{
  Tabitha K. Peter\\Dept. of Biostatistics\\University of Iowa
  \and
  Patrick J. Breheny\\Dept. of Biostatistics\\University of Iowa
}
\begin{document}

\maketitle


\abstract{
In this paper, we develop an implementation of cross-validation for penalized linear mixed models. While these models have been proposed for correlated high-dimensional data, the current literature implicitly assumes that tuning parameter selection procedures developed for independent data will also work well in this context. We argue that such naive assumptions make analysis prone to pitfalls, several of which we will describe. Here we present a correct implementation of cross-validation for penalized linear mixed models, addressing these common pitfalls. We support our methods with mathematical proof, simulation study, and real data analysis. 

  \bigskip

\noindent \textbf{Keywords:} Cross-validation, Penalized Regression, Lasso, High-dimensional data analysis, Linear mixed models
}

\section{Introduction}\label{sec:notation}

Penalized linear mixed modeling (PLMM) is a regression approach designed to analyze correlated high-dimensional data. Penalized regression methods such as the lasso \citep{Tibshirani1996} are attractive for high-dimensional data because they create sparse solutions, and linear mixed modeling is an established framework for analyzing correlated data \citep{Laird1982}. Combining the strengths of these two methodologies has been proposed for analyzing high-dimensional data with correlation structure \citep{Rakitsch2013, ggmix, Reisetter2021a}. Two important areas of potential application for PLMM include genome-wide association studies (GWAS) with population structure and gene expression analyses in the presence of possible batch effects. Recognizing the potential use for PLMM, we also see a need to examine cross-validation implementation for these models. To assume, as most existing literature does, that the tuning parameter selection methods developed for independent data will also work in the PLMM context is naive. We show here that such assumptions make analysis prone to pitfalls, which we address in our development of a cross-validation implementation for PLMMs.  

We begin with defining the $n \times p$ design matrix $\X$, in which the $n$ rows are observations (e.g., samples) and the $p$ columns are features (e.g., genetic variants, etc.). Throughout, we assume that this $\X$ has been column-standardized so that the mean of each column $\x_j$ is $0$ and the variance of each $\x_j$ is $1$ ($j \in 1, ..., p)$. We further assume $\y$ to be an $n \times 1$ column vector representing a normally distributed outcome. We use the data-generating model 
\begin{equation}
\label{eqn:main-model}
    \y = \X\bb + \Z\bg + \bep
\end{equation}
where $\Z$ is a $n \times g$ matrix of indicators representing grouping structures among rows, and $\bg$ is a $g \times \one$ vector representing how the correlation structure in $\Z$ impacts $\y$. Since $\Z$ and $\bg$ are typically unknown in practice, we re-express the model in terms of an unknown confounder $\u$: 
\begin{equation}
\label{eqn:u-model}
    \y = \X\bb + \u + \bep
\end{equation}
under the assumptions that $\bep \perp \u$, $\bep \sim N(0, \sigma^2_\epsilon\I)$, and $\u \sim N(0, \sigma^2_s \K)$.
We further define the  $n \times n$ covariance matrix of $\y$ as $\bS = \sigma^2_s\K + \sigma^2_\epsilon\I$, where $\sigma^2_s$ and $\sigma^2_\epsilon$ represent the variances due to \textbf{s}tructure and noise, respectively. Then we write 
\begin{equation}
\begin{aligned}
     \bS &= [\snr\K + \frac{\sigma^2_\epsilon}{\bothvar}\I](\bothvar) \\
     &= [\eta \K + (1 - \eta\I)]\tau^2,
    \label{eqn:Sigma-def}
\end{aligned}
\end{equation}
where $\eta = \sigma^2_s / (\sigma^2_\epsilon + \sigma^2_s)$ and $\tau^2 = \bothvar$. Note that $\tau^2$ can be absorbed into the penalty parameter $\lambda$ (i.e., this term does not affect the loss).
With these definitions, we may re-express the data generating model as
\begin{equation}
    \label{eqn:generating-mod}
    \y = \X\bb + \u + \bep \equiv \y \sim N(\X\bb, \bS).
\end{equation}
The central aim of a penalized linear mixed model is to precondition (or `rotate') the data as described by \citet{Jia2015}, using the square root of the covariance matrix so that data are \textit{decorrelated}, i.e., 
\begin{equation}
    \label{eqn:rot-mod}
    \bS^{-1/2}\y \sim N(({\bS^{-1/2}}\X)\bb, \bS^{-1/2}\bS\bS^{-1/2}) \equiv \roty \sim N(\rotX\bb, \I).
\end{equation}

To fit such a model, one must estimate $\bS$ which in turn requires estimating $\K$ and $\eta$. Typically, in a PLMM, the correlation among the features, $\hat\K = \frac{1}{p}\X\X^\top$, is used to estimate $\K$ \citep{Hayes2009}. In the specific context of genome-wide association data, $\hat\K$ is also referred to as the genetic relationship matrix or realized relationship matrix. 
To estimate $\eta$, an efficient and straightforward approach is to obtain its MLE under the null model where $\bb = \zero$; note that this is a one-parameter optimization problem \citep{FaSTLMM}. Using $\hat\K$ and $\hat\eta$, we write the estimated covariance matrix $\hat\bS = \hat\eta \hat\K + (1 - \hat\eta) \I$. We will use these estimates in subsequent derivations. 

The rest of this paper is structured as follows: Section \ref{sec:intercept-result} presents the derivation of a result that offers a computationally convenient calculation of the intercept in PLMMs. Section \ref{sec:pitfalls} outlines four pitfalls that arise when applying PLMMs and proposes solutions to these issues. Section \ref{sec:sim-study} presents a simulation study to illustrate the pitfalls described in Section \ref{sec:pitfalls}, and Section \ref{sec:real-data} applies our proposed solutions to real data analysis. The discussion in Section \ref{sec:discussion} makes a generalization about the evidence provided by our simulation study and real data analysis, leaving the reader with practical recommendations on where to begin in using the PLMM to analyze high-dimensional data. 

\section{A closed-form intercept result}\label{sec:intercept-result}

In the case of PLMMs, we find a useful result in which the intercept may be calculated as the mean of the outcome vector $\y$. While it is standard practice in lasso models for the intercept to be calculated as the mean of the outcome, at the outset it is not obvious that such a result can also hold in the correlated context of PLMMs. We take the time to show this result here because of its implications for efficiency in computing PLMMs; this result avoids the need to create a copy of the design matrix that has an intercept column. To avoid such copying becomes particularly advantageous when $p$ is large, as it is in most cases where the PLMM framework would be used. We derive this result using the following two lemmas: 

\begin{lemma}
For the matrix $\hat\K = \tfrac{1}{p}\X\X^\top$, where $\X$ has been column-standardized, all eigenvectors of $\hat\K$ can be partitioned into two categories:
\par 1. Eigenvectors with mean $0$
\par 2. Eigenvectors associated with zero eigenvalues
\end{lemma}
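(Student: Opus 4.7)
The plan is to exploit two facts: column-standardization of $\X$ forces $\one \in \real^n$ to lie in the null space of $\hat\K$, and the symmetry of $\hat\K$ forces eigenvectors of distinct eigenvalues to be orthogonal. Putting these together shows that any eigenvector corresponding to a nonzero eigenvalue must be orthogonal to $\one$, i.e. have entries summing (and hence averaging) to zero.

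Concretely, I would first observe that column-standardization means $\one\Tr \x_j = 0$ for every $j$, which can be written compactly as $\X\Tr \one = \zero$. Applying $\hat\K$ to $\one$ then gives
\begin{equation*}
\hat\K \one \;=\; \tfrac{1}{p}\X\X\Tr \one \;=\; \tfrac{1}{p}\X \zero \;=\; \zero,
\end{equation*}
so $\one$ belongs to the null space of $\hat\K$, i.e.\ to its $0$-eigenspace.

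Next, I would invoke symmetry of $\hat\K$: because $\hat\K = \hat\K\Tr$, its eigenspaces associated with distinct eigenvalues are mutually orthogonal. In particular, every eigenvector $\v$ with nonzero eigenvalue $\lambda \neq 0$ is orthogonal to the entire $0$-eigenspace, which contains $\one$. Thus $\one\Tr \v = 0$, and since $\one\Tr \v$ is (up to the factor $n$) the mean of the entries of $\v$, such a $\v$ has mean zero, placing it in category 1. Any remaining eigenvector is by construction associated with a zero eigenvalue, placing it in category 2. The two categories therefore exhaust all eigenvectors and establish the claimed partition.

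I do not anticipate a serious obstacle; the only subtlety worth flagging is that the two categories are not disjoint (an eigenvector in the null space may happen to have mean zero), but the lemma only asserts that every eigenvector falls into at least one of them, which is exactly what the above argument delivers.
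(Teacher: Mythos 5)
Your proof is correct and rests on the same key fact as the paper's: column-standardization gives $\X\Tr\one = \zero$, hence $\one\Tr\hat\K = \zero$, which combined with the eigenvector equation forces $s_k(\one\Tr\u_k)=0$. The paper obtains this by left-multiplying $\hat\K\u_k = s_k\u_k$ by $\one\Tr$ directly, while you phrase it as $\one$ lying in the null space plus orthogonality of eigenspaces of a symmetric matrix; these are the same argument in slightly different clothing, and your remark that the two categories need not be disjoint is a fair (harmless) observation.
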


\begin{proof}
    
We write the eigendecomposition $\hat{\mathbf{K}} = \mathbf{U}\mathbf{S}\mathbf{U}^\top$. We denote each column of $\mathbf{U}$ as $\mathbf{u}_k$, $k \in 1, ..., r$, and we denote the eigenvalues of diagonal matrix $\mathbf{S}$ as $s_k$, all of which must be real since $\hat{\mathbf{K}}$ is symmetric. Then we have: 

\begin{align*}
\hat{\K}\u_k &= s_k\u_k \forall k && \text{by definition} \\
\one_n^\top\hat{\K}\u_k &= \one_n^\top s_k\u_k && \text{left multiply} \\
(\one_n^\top\hat{\K})\u_k &= s_k(\one_n^\top\u_k) && \text{associativity} \\
0\u_k &= s_k(\one_n^\top\u_k) && \text{columns of } \hat{\K} \text{ sum to } 0 \text{ because } \one\Tr\X=\zero\\ 
\implies s_k(\one_n^\top\u_k) &= 0 
\end{align*}
Thus, either the eigenvalue is zero or the eigenvector has mean zero.
\end{proof}

\begin{lemma}
    Given a column-standardized $\X$, the inverse of $\bS = \tau^2[\tfrac{\eta}{p} \X\X \Tr + (1-\eta)]\I$ may be written as
    $$ \bS^{-1} = \U_1\Q\U_1 + (1 - \eta)^{-1}\I, $$
    where $\U_1$ is a matrix with mean-0 eigenvectors as its columns and $\Q$ is a matrix of weights. 
\end{lemma}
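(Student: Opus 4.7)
The plan is to diagonalize $\hat\K$ and use Lemma~1 to split the spectral representation of $\bS^{-1}$ into a piece supported on the mean-zero eigenspace plus an isotropic remainder. First I would write the eigendecomposition $\hat\K = \U\mathbf{S}\U\Tr$ from Lemma~1, so that
$$\bS = \tau^2\,\U\bigl[\eta\mathbf{S} + (1-\eta)\I\bigr]\U\Tr,$$
and, since $\U$ is orthogonal,
$$\bS^{-1} = \tau^{-2}\,\U\,\diag\!\left(\tfrac{1}{\eta s_k + 1 - \eta}\right)\U\Tr = \tau^{-2}\sum_{k} \frac{1}{\eta s_k + 1 - \eta}\,\u_k\u_k\Tr.$$

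Next I would invoke Lemma~1 to partition the columns of $\U$ as $[\U_1 \mid \U_0]$, where $\U_1$ collects those eigenvectors whose eigenvalue is nonzero (hence, by Lemma~1, whose mean is $0$) and $\U_0$ collects the zero-eigenvalue eigenvectors. On the $\U_0$ block every weight equals $\tfrac{1}{1-\eta}$, so
$$\bS^{-1} = \tau^{-2}\,\U_1\,\diag\!\left(\tfrac{1}{\eta s_k + 1 - \eta}\right)\U_1\Tr + \tau^{-2}\,\tfrac{1}{1-\eta}\,\U_0\U_0\Tr.$$
The remaining step is a purely algebraic rearrangement: by orthonormality of $\U$ we have $\U_0\U_0\Tr = \I - \U_1\U_1\Tr$, and substituting and collecting the $\U_1$ terms (and absorbing $\tau^2$ as the paper notes above) yields the claimed form
$$\bS^{-1} = \U_1\,\Q\,\U_1\Tr + \frac{1}{1-\eta}\,\I,$$
with $\Q$ diagonal and $Q_{kk} = \frac{1}{\eta s_k + 1 - \eta} - \frac{1}{1-\eta} = -\frac{\eta s_k}{(1-\eta)(\eta s_k + 1 - \eta)}$, precisely the advertised ``matrix of weights.''

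Rather than a serious obstacle, the one point requiring care is the treatment of eigenvectors that are simultaneously mean-$0$ \emph{and} associated with a zero eigenvalue (for example $\tfrac{1}{\sqrt{n}}\one$, which lies in the null space of $\hat\K$ because column-standardization forces $\X\Tr\one = \zero$). Lemma~1 permits either classification, and the final identity is insensitive to the choice: the weight on any zero-eigenvalue direction equals $\tfrac{1}{1-\eta}$, the coefficient already supplied by the isotropic $\I$ term, so shifting such columns between $\U_1$ and $\U_0$ merely puts a zero on the corresponding diagonal of $\Q$ and does not change $\bS^{-1}$. Everything else is routine spectral algebra.
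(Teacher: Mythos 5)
Your proof is correct and takes essentially the same route as the paper: eigendecompose $\hat\K = \U\S\U\Tr$, write $\bS^{-1}=\U\,\diag\!\big(\tau^{-2}(\eta s_k+1-\eta)^{-1}\big)\U\Tr$, and use Lemma~1 to split off the zero-eigenvalue block and absorb it into the $(1-\eta)^{-1}\I$ term. If anything, your writeup is more precise than the paper's, since you make the step $\U_0\U_0\Tr=\I-\U_1\U_1\Tr$ explicit and give the corrected diagonal weights $Q_{kk}=\frac{1}{\eta s_k+1-\eta}-\frac{1}{1-\eta}$ (up to $\tau^2$), a subtraction the paper's stated $\Q$ glosses over; the only small slip is your parenthetical example, because $\tfrac{1}{\sqrt{n}}\one$ is a null vector of $\hat\K$ but is not mean-zero, so it belongs unambiguously to the zero-eigenvalue class rather than illustrating the doubly-classified case.
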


\begin{proof}
We begin with the definition of $\bS^{-1}$: 

\begin{align*}
     \bS^{-1} &= ([\eta\frac{1}{p}\X\X\Tr + (1 - \hat{\eta})\I]\tau^2)^{-1} && \text{by definition} \\
              &= ([\eta\U\S\U^\top + (1-\eta)\U\U^\top]\tau^2)^{-1} && 
     \text{eigendecomposition; orthogonality} \\
              &= \U([\eta\S + (1 - \eta)\I]\tau^2)^{-1}\U^\top && 
     \text{orthogonality again; factoring} \\
              &= \U\W^2\U^\top && \text{define } \W^2 \text{ as a diagonal matrix of weights}
 \end{align*}
Next, note that we may partition $\U$ and $\W^2$ according to the result in part (a): 
$$
\U = 
\begin{bmatrix}
      \U_1 && \U_2 
\end{bmatrix}
$$

$$
\W^2 = 
\begin{bmatrix}
    \W^2_1 && \zero \\
    \zero && \W^2_2
\end{bmatrix}
$$
where the columns of $\U_1$ each have mean $0$, $\W^2_2 = \text{diag}(\frac{1}{1 - \hat{\eta}})$, and $\W^2$ denotes a block diagonal matrix. We then define $\Q = ([\hat{\eta}\S_r + (1 - \hat{\eta})\I]\tau^2)^{-1}$, where $\S_r$ is the submatrix of $\S$ where $\text{diag}(\S) > 0$ (that is, $\text{diag}(\S_r)$ represents the nonzero eigenvalues of $\K$. Using this $\Q$, we obtain: 
    
\begin{align*}
        \hat{\bS}^1 &= \U_1\W^2_1\U_1^\top + \U_2\W^2_2\U_2^\top && 
        \text{multiply} \\
        \hat{\bS}^{-1} &= \U_1\Q\U_1 + (1 - \hat{\eta})^{-1}\I  && 
        \text{factoring; using Lemma (1) result}
\end{align*}

$\implies \hat{\bS}^{-1}$ may be written as $\hat{\bS}^{-1} = \U_1\Q\U_1 + (1 - \hat{\eta})^{-1}\I.$
\end{proof}

Using Lemmas 1 and 2, we may prove the following theorem: 

\begin{theorem}
\label{theorem:intercept}
The loss of a PLMM may be partitioned into two optimization problems: one part involves only $\beta_0$, and the other involves only $\bb$. 
\end{theorem}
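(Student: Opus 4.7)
The plan is to write out the full PLMM loss with an explicit intercept, then expand the quadratic form and show that every term coupling $\beta_0$ and $\bb$ vanishes, which immediately yields the partition claimed by the theorem.

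First I would make the loss explicit. The decorrelated model~\eqref{eqn:rot-mod} suggests taking
\begin{equation*}
L(\beta_0, \bb) \;=\; (\y - \beta_0\one - \X\bb)\Tr \bS^{-1} (\y - \beta_0\one - \X\bb) \;+\; \lambda P(\bb),
\end{equation*}
and expanding it as
\begin{equation*}
L = \y\Tr\bS^{-1}\y \;-\; 2\beta_0 \one\Tr\bS^{-1}\y \;-\; 2\bb\Tr\X\Tr\bS^{-1}\y \;+\; \beta_0^2\,\one\Tr\bS^{-1}\one \;+\; 2\beta_0\,\one\Tr\bS^{-1}\X\bb \;+\; \bb\Tr\X\Tr\bS^{-1}\X\bb + \lambda P(\bb).
\end{equation*}
The only term that mixes $\beta_0$ and $\bb$ is $2\beta_0\,\one\Tr\bS^{-1}\X\bb$, so the whole theorem reduces to showing that this term is identically zero.

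The key step, and the one that makes the result nontrivial in the PLMM setting, is handling $\bS^{-1}\one$. Here I would invoke Lemma 2 to write $\bS^{-1} = \U_1\Q\U_1\Tr + (1-\eta)^{-1}\I$, and then use Lemma 1: since every column of $\U_1$ is a mean-zero eigenvector of $\hat\K$, we have $\U_1\Tr\one = \zero$, and therefore
\begin{equation*}
\bS^{-1}\one \;=\; \U_1\Q\,(\U_1\Tr\one) + (1-\eta)^{-1}\one \;=\; (1-\eta)^{-1}\one.
\end{equation*}
Combining this with the column-standardization assumption $\one\Tr\X = \zero\Tr$ gives $\one\Tr\bS^{-1}\X = (1-\eta)^{-1}\one\Tr\X = \zero\Tr$, so the cross term vanishes. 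The same identity also simplifies $\one\Tr\bS^{-1}\one = n/(1-\eta)$ and $\one\Tr\bS^{-1}\y = \one\Tr\y/(1-\eta)$.

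With the cross term gone, the loss partitions as $L(\beta_0,\bb) = f(\beta_0) + g(\bb) + \text{const}$, where $f(\beta_0) = (1-\eta)^{-1}\bigl(n\beta_0^2 - 2\beta_0\,\one\Tr\y\bigr)$ depends only on $\beta_0$ (and is minimized at $\hat\beta_0 = \bar y$, recovering the announced closed form), while $g(\bb) = -2\bb\Tr\X\Tr\bS^{-1}\y + \bb\Tr\X\Tr\bS^{-1}\X\bb + \lambda P(\bb)$ depends only on $\bb$. The main obstacle is really just the $\bS^{-1}\one$ computation; once the mean-zero eigenvector structure from Lemma 1 is applied, the decoupling follows without further work, and the standardization of $\X$ is what bridges the gap between ``intercept decouples from $\bS^{-1}$'' and ``intercept decouples from $\X\bb$.''
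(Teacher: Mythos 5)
Your proof is correct and uses essentially the same argument as the paper: both hinge on Lemma 2's form $\bS^{-1} = \U_1\Q\U_1\Tr + (1-\eta)^{-1}\I$ together with $\one\Tr\U_1 = \zero$ (Lemma 1) and $\one\Tr\X = \zero$ from column-standardization, so that the only term coupling $\beta_0$ and $\bb$ vanishes. The only difference is cosmetic: the paper adds and subtracts $\one\bar y$ and kills the resulting cross-product term, whereas you expand the quadratic directly and then recover $\hat\beta_0 = \bar y$ by minimizing the $\beta_0$-part, which if anything states the decoupling a bit more explicitly.
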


Define the loss of a PLMM as 
\begin{align*}
    \cL &= (\y - \one\beta_0 - \X\bb)\Tr\Sigma\Inv(\y - \one\beta_0 - \X\bb) \\
    &= (\y - \one\bar{y} + \one\bar{y} - \one\beta_0 - \X\bb)\Tr\Sigma\Inv(\y - \one\bar{y} + \one\bar{y} - \one\beta_0 - \X\bb) && \text{add/subtract mean} 
\end{align*}

Then we have the following result from the cross product term: 

\begin{align*}
    \text{cross product} &= (\one\bar{y} - \one\beta_0)\Tr(\U_1\Q\U_1 + (1 - \hat{\eta})^{-1}\I)(\y - \one\bar{y} - \X\bb) && 
    \text{using Lemma (2) result} \\
    &= (\bar{y} - \beta_0)\one\Tr(\U_1\Q\U_1 + (1 - \hat{\eta})^{-1}\I)(\y - \one\bar{y} - \X\bb) && \text{factor} \\
    &= (\bar{y} - \beta_0)\one\Tr(\U_1\Q\U_1 + (1 - \hat{\eta})^{-1}\I)(\dot\y - \X\bb) && 
    \text{let $\dot\y \equiv \y - \one\bar{y}$} \\
    &= (\bar{y} - \beta_0)\one\Tr(1 - \hat{\eta})^{-1}\I(\dot\y - \X\bb) && \one\Tr\U_1 = \zero \\
    &= \frac{(\bar{y} - \beta_0)\one\Tr(\dot\y - \X\bb)}{1 - \hat{\eta}} && \text{simplify} \\
    &= 0 && \one\Tr\dot\y = 0 \hspace{3pt}\text{and}\hspace{3pt} \one\Tr\X = 0
\end{align*}

$\implies \cL$ may be partitioned into two problems, wherein we solve: 
\smallskip
\par 1. $(\bar y - \beta_0)\one\Tr = 0$ (trivial)
\par 2. $\dot\y - \X\bb = 0$ \\
This theorem brings a computational convenience into the application of PLMMs: instead of attaching a $\one$ column to $\X$ and carrying this through the model fitting procedure, we can simply fit a model without an intercept column, and designate $\hat\beta_0 = \bar y$ outside of the model fitting procedure. The beauty of this simplification is especially useful when $p$ is large, as there is no need to create a copy of the design matrix soley for the purpose of adding an intercept column (e.g., a column of $1$s) prior to model fitting. 

\section{Addressing pitfalls in cross-validation}\label{sec:pitfalls}

This section points out and addresses four pitfalls that can arise in implementing cross-validation for PLMMs. Sections \ref{sec:precond} and \ref{sec:low-var-feat} describe mathematical results, whereas Sections \ref{sec:blup} and \ref{sec:rotation} describe computational issues. 

\subsection{Constructing the preconditioning matrix}\label{sec:precond}

In Section~\ref{sec:intercept-result}, we used the spectral decomposition $\K = \U\S\U\Tr$ write $\bS$ in the following form:
\begin{equation}
\label{eqn:sig-def}
    \bS = (\sigma_s^2\K + \sigma_\epsilon^2\I)\tau^2 = [\U (\sigma_s^2 \S + \sigma_\epsilon^2\I) \U \Tr]\tau^2
\end{equation}
To construct the preconditioning matrix, an alternative to the spectral decomposition is to carry out the singular value decomposition (SVD) of  $\X=\U\D\V\Tr$.
This SVD approach would forgo the construction of $\X\X\Tr$; however, the SVD approach causes dimensionality issues. In particular, many SVD implementations use a `thin' SVD, without making the dimensions of $\U$ explicit. If $n > p$, a thin SVD would result in $\U$ having dimension $n \times r$, where $r$ is the number of chosen eigenvectors. This $n \times r$ dimension conflicts with the $n \times n$ dimension of $\I$ term in the variance structure; therefore, the SVD factorization of $\hat\K$ is not compatible with the $n \times n$ unstructured component of the variance. In order to obtain $\U$, we must construct $\hat\K = \frac{1}{p}\X\X\Tr$ and then take the eigendecomposition $\text{eigen}(\hat\K) = \U\S\U\Tr$. 

\subsection{The importance of re-standardization}\label{sec:low-var-feat}

The rotation of $\X$ that yields $\rotX = \bS^{-1/2}\X$ changes the variation in the columns the design matrix. In other words, the variances of the columns of $\rotX$ may be quite different than the variances of the columns of $\X$. This causes problems in penalized regression models, where normalizing these variances is essential to ensure that equal penalization is applied to all features. To maintain this normalization, we re-standardize the data after any maneuver that changes column-wise variation (e.g., preconditioning or subsetting).

For example, suppose that $\X$ has columns $\x_j, j \in 1, \cdots, p$, and suppose that for some $j$ the column $\x_j$ is a feature with low variation, as in 
$$
\X = [\x_1, \x_2, \cdots, \x_j, \cdots, \x_p]
$$
When the rows of $\X$ are partitioned into training set $\X\train$ and test set $\X\test$ in the $k$th fold of of the CV procedure, it is possible that $\x_j$ may become a constant feature in $\X\train$: 
$$
\X\train = [\x_1, \x_2, \cdots, \textcolor{red}{\x_j}, \cdots, \x_p]
$$
where the red color indicates that the feature is constant. Rotating $\X\train$ will transform the columns so that the variance of $\tilde\x_j$ is not exactly zero in $\rotX\train$:
$$
\rotX\train = [\tilde\x_1, \tilde\x_2, \cdots, \textcolor{orange}{\tilde\x_j}, \cdots, \tilde\x_p].
$$
While it is possible to fit a model on $\rotX\train$, this will result in aberrant $\bh_j$ estimates as the optimization will include dividing by the low variation of $\tilde\x_j$. 
We note that this kind of scenario is \textit{probable} to arise in contexts where data include some features with low variation (e.g., genetic markers with low minor allele frequency). This example highlights the importance of re-standardizing after subsetting. To avoid the pitfall in this example, we re-standardize the data $\X\train$ in every fold of cross-validation (when data are subset into test/train sets). Calculating the column-wise variance values acts as a way to `screen' for near-constant features. Any features in $\X\train$ with a zero or near-zero variance are designated to have a penalty of $\infty$, so that those features are never selected for the model that is fit in that given fold. 

\subsection{Prediction for PLMM}\label{sec:blup}
Prediction is an essential element of data analysis, and the PLMM framework lends itself naturally to the use of best linear unbiased prediction (BLUP). BLUP incorporates the correlations among observations in addition to the direct effects of individual features, and this approach increases accuracy in a wide variety of applications \citep{Robinson1991}. The BLUP adjustment is readily obtained from the estimate $\hat\bS$ calculated during the PLMM fitting process. Let $\{\X_1, \y_1\}$ represent a dataset used to fit a penalized linear mixed model, let $\X_2$ represent new data for which predictions are to be made, and partition $\bS = \Var(\y_1, \y_2)$ as: 
$$
\hat\bS = 
\begin{bmatrix}
    \hat\bS_{11} && \hat\bS_{12} \\
    \hat\bS_{21} && \hat\bS_{22}
\end{bmatrix}.
$$
Since the covariance has been estimated in the model fit for $\X_1$, we have $\hat\bS_{11}$ already; we need only to invert this matrix to obtain $\hat\bS_{11}^{-1}$. We also have the residuals from the first model fit, $\y_1 - \X_1\bbh$. Using the new data $\X_2$, all we need to calculate is
$$
\hat\bS_{21} = (\X_2\X_1)^{\Tr}
$$
in order to obtain the BLUP for $\y_2$
$$
\hat\y_2 = \X_2\bbh + \hat\bS_{21}\hat\bS_{11}^{-1}(\y_1 - \X\bbh).
$$
Knowing that the BLUP estimate improves prediction and seeing that this approach is natural in the context of PLMM, we recommend BLUP as the default for prediction with PLMM. 

One important caveat for calculating the BLUP is the need to use consistent scaling for the estimates $\hat\bS_{21}$ and $\hat\bS_{11}$. To highlight this issue, we write the BLUP in terms of $\X_1$ and $\X_2$: 

$$
\hat\y_{\text{BLUP}} = \X_2\bbh + \frac{\hat\eta}{p}\X_2\X_1\Tr\left(\frac{\hat\eta}{p} (\X_1\X_1\Tr) + (1 - \hat\eta)\I\right)^{-1}(\y_1 - \X_1\bbh).
$$
If $\hat\bS_{11}$ was calculated using a column-standardized $\X_1$, then $\X_2$ should be standardized using the same centering/scaling values that were used to standardize $\X_1$ in order to ensure the $\hat\bS_{11}$ and $\hat\bS_{21}$ components are on the same scale. 

This issue of scaling has important implications for cross-validation, which involves fitting a model on the entire data set and then dividing the data into testing/training subsets.
Each fold of cross-validation requires calculating $\X_1\X_1\Tr$, where $\X_1$ denotes the training data for the current fold. In principle, this is the same same as subsetting the covariance from the model fit on the entire dataset, $(\X\X\Tr)_{11}$. However, because the model fitting process involves re-standardizing the design matrix within each fold, these two matrices are different -- one is based on standardizing the columns of $\X$ while the other is based on standardizing the columns of $\X_1$. To be explicit, denote the within-fold re-standardized training and testing datasets as $\stdX_1$ and $\stdX_2$, respectively.
The model being fit within the fold is based on using $\stdX_1\stdX_1\Tr$ to estimate the variance. If the subsets $(\X\X\Tr)_{11}$ and $(\X\X\Tr)_{21}$ are used for the BLUP adjustment, then the estimates of $\bS$ used for BLUP adjustment and used to estimate $\bb$ are different. 
We show in Section \ref{sec:scaling-sim} how this subtle difference in the variance estimates negatively affects estimation.  

\subsection{Rotation in cross-validation}\label{sec:rotation}

The model fitting process consists of three steps: (1) construct the preconditioner $\bS^{-1/2}$ as described in \ref{sec:precond}, (2) rotate the data to obtain $\rotX$, and (3) fit the model on the preconditioned data $\tilde \X, \byy$. This section investigates whether all all of these steps need to be repeated for every cross-validation fold.

In step (1), the preconditioner that must be constructed is $\bS\train^{-1/2}$, where $\bS\train$ denotes the submatrix of $\bS$ consisting of the observations in the fold used for modeling (as opposed to the observations reserved for prediction). The inversion requires an eigendecomposition, which is typically a computationally expensive procedure. Steps (2) and (3) may also take considerable computation time when the dataset is large. We studied three approaches for navigating these computational challenges: 

\begin{enumerate}
    \item \textbf{Full CV}: Carry out all three steps in each fold of CV; this includes taking the eigendecomposition of $\X\train$, rotating $\X\train$ to obtain $\rotX\train$, and fitting the model. 
    \item \textbf{Inner CV}: Takes step (1) outside of the CV procedure; using one eigendecomposition of the entire dataset, simply subset the rows of $\U$ to obtain the preconditioner for $\rotX\train$. We call this inner CV because the preconditioning step happens \textit{inside} each CV fold.
    \item \textbf{Outer CV}: Takes step (2) outside of the CV procedure; rather than rotating the data within each fold to obtain $\rotX\train = \bS\train^{-1/2}\X\train$, this approach preconditions the data a single time and subsets the rows of $\rotX$ to obtain $\rotX\train$. We call this outer rotation because the preconditioning of the data happens \textit{outside} of the CV procedure. 
\end{enumerate}

Table \ref{tab:compare_cv} summarizes these three CV approaches: 

\begin{table}[H]
    \centering
    \begin{tabular}{c|c|c|c}
     \textbf{In each fold} & \textbf{Full} & \textbf{Inner} & \textbf{Outer} \\
     Eigendecomposition  & Yes & No & No \\
     Rotation & Yes & Yes & No \\
     Fit model & Yes & Yes & Yes
    \end{tabular}
    \caption{Comparison of CV approaches}
    \label{tab:compare_cv}
\end{table}

As established by \citet{Hastie2009}, the best practices for CV implementation are to cross-validate each aspect of the model fitting process. From this perspective, full CV is the `gold standard' approach. Moreover, only full CV is able to take advantage of Theorem \ref{theorem:intercept}. Outer CV is the most computationally attractive approach, and inner CV is a compromise between the other two approaches. Sections \ref{sec:sim-study} and Section \ref{sec:real-data} compare these three CV approaches through simulation and real data analysis, respectively. 


\section{Simulation studies}\label{sec:sim-study}

\subsection{Scaling and prediction}\label{sec:scaling-sim}
Our first simulation study is designed to illustrate the pitfall described in Section \ref{sec:blup}, in which using subsets of the full data covariance matrix
$$\hat\bS_{11} = (\X\X\Tr)_{11}, \hat\bS_{21} = (\X\X\Tr)_{21}$$
instead of re-calculating these components on the scale of the standardized training data as 
$$\hat\bS_{11} = \stdX_1\stdX_1\Tr, \hat\bS_{21} = \stdX_2\stdX_1\Tr$$
leads to an inconsistency in our BLUP estimation. In this simulation study, we compared this BLUP implementation (which we refer to as the ``incorrect BLUP'') with two other CV methods for high-dimensional data: \textbf{glmnet}'s \verb|cv.glmnet()| (a penalized approach, but not a mixed model), and \textbf{plmmr}'s \verb|cv_plmm()| (our penalized linear mixed model approach, which has consistent scaling). Figure \ref{fig:bad-blup} illustrates the results, where we see that this misguided shortcut in subsetting BLUP components inflates estimation error. Notice that this would also have important implications for the selection of tuning parameters such as $\lambda$ as well, given that cross-validation is a standard method for tuning parameter selection.

\begin{figure}[H]
    \centering
    \includegraphics[width=0.75\linewidth]{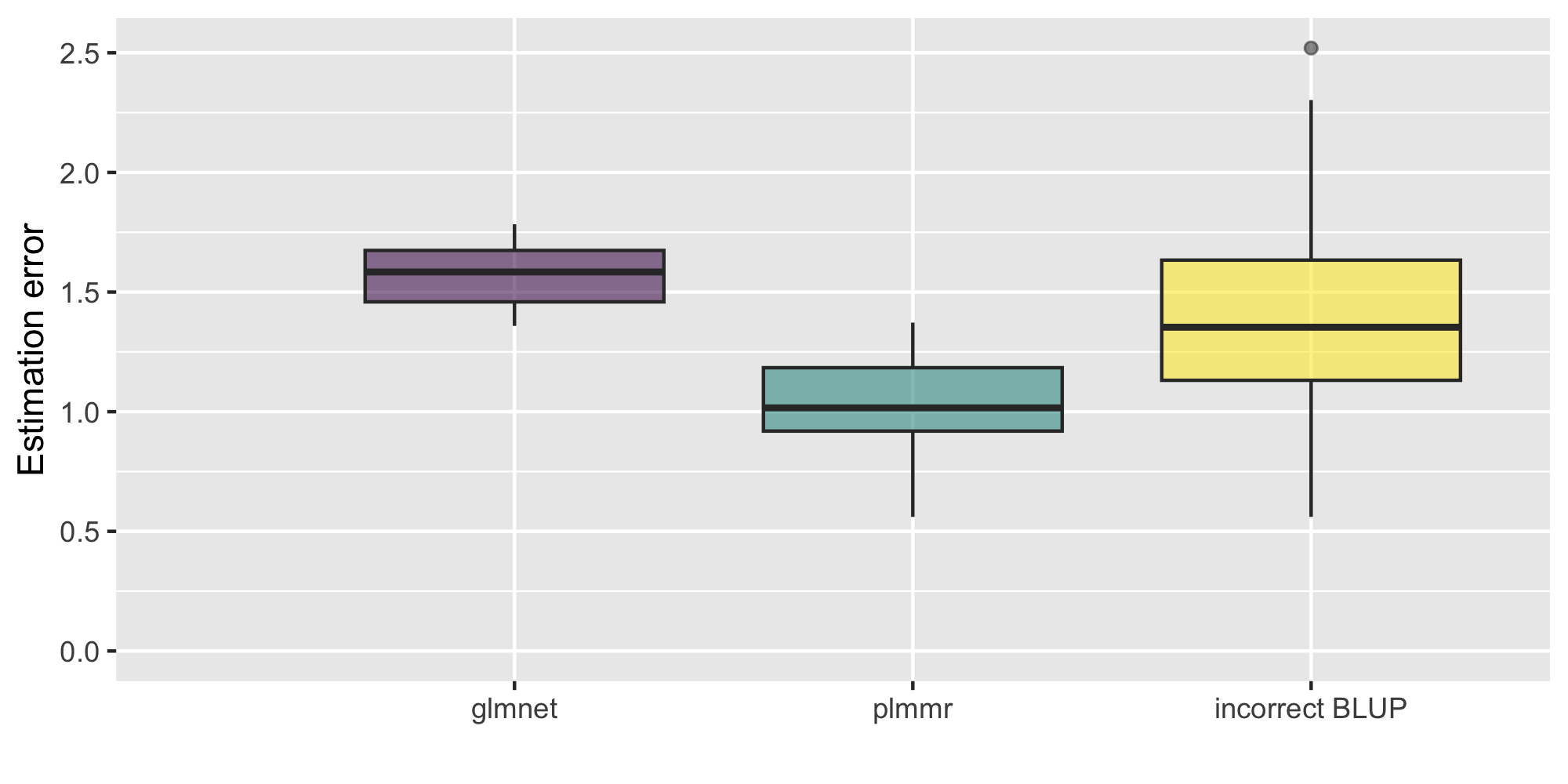}
    \caption{A simulation showing the negative consequences that result from fitting the model and constructing the BLUP using variance estimates on different scales. 30 simulation replications, synthetic correlated data (see Appendix for details). Estimation was assessed using the root-squared estimation error ${||\hat \beta - \beta^* ||}$ (RSEE).}
    \label{fig:bad-blup}
\end{figure}

\subsection{Comparing CV approaches}\label{sec:compare_cv_approaches}
We carried out another simulation study to compare performance of the full, inner, and outer rotation CV techniques. For this simulation study, we used semi-synthetic data in which $\X$ represents real genetic data from 1,401 participants in the PennCath study \citep{PennCath}. To simulate correlation among observations (mimicking phenomena like batch effects or cryptic relatedness), we created a five-level factor variable, assigned each of the 1,401 observations to one of the five levels, and constructed a matrix $\Z$ of indicators corresponding to the factor levels. Having chosen values for $\gamma$ and $\beta$, we simulated a normally distributed outcome $\y$ according to Equation \ref{eqn:main-model}. In all simulation replications, we set the magnitude of $\gamma$ to be $2$. We divided our simulation study into two parts, A and B, based on the magnitudes of the signal $\beta$ values relative to the $\gamma$ parameter. In part A, four $\beta$ were chosen to be the true signals, each having a magnitude of $2$ - we called this the \textit{large signal} setting. In part B, four $\beta$ were chosen to as true signals with a magnitude of $1$ -- this was the \textit{small signal} setting, as $|\beta| < |\gamma|$. 

\subsubsection{Large signal setting} \label{sec:sim-a}

We fit and selected models using each of three CV approaches: full, inner, and outer. Each approach used five CV folds. At the value of $\lambda$ chosen by each approach as the tuning parameter that minimized cross-validation error (CVE), we evaluated several performance metrics, including the false discovery rate (FDR), the number of variables selected (NVAR), the true discovery rate (TDR), the CVE, and the RSEE. Both Table \ref{tab:large_scale_metrics} and Figure \ref{fig:large-signal-sim} summarize these metrics.

We notice from Table \ref{tab:large_scale_metrics} and Figure \ref{fig:large-signal-sim} that the outer CV approach performs notably worse than the other approaches across all performance metrics. Outer CV chooses over 200 variables on average, with a high false discovery rate, and has an estimation error almost twice as high as the other approaches.

Full and inner CV are more comparable, with nearly identical estimation error. Full CV does have the advantage of a lower average false discovery rate compared to inner CV, 0.64 compared to 0.77. Corresponding to these FDR results, the inner CV approach chose a somewhat larger model than full CV.   

\begin{figure}[H]
    \centering
    \includegraphics[width=0.9\linewidth]{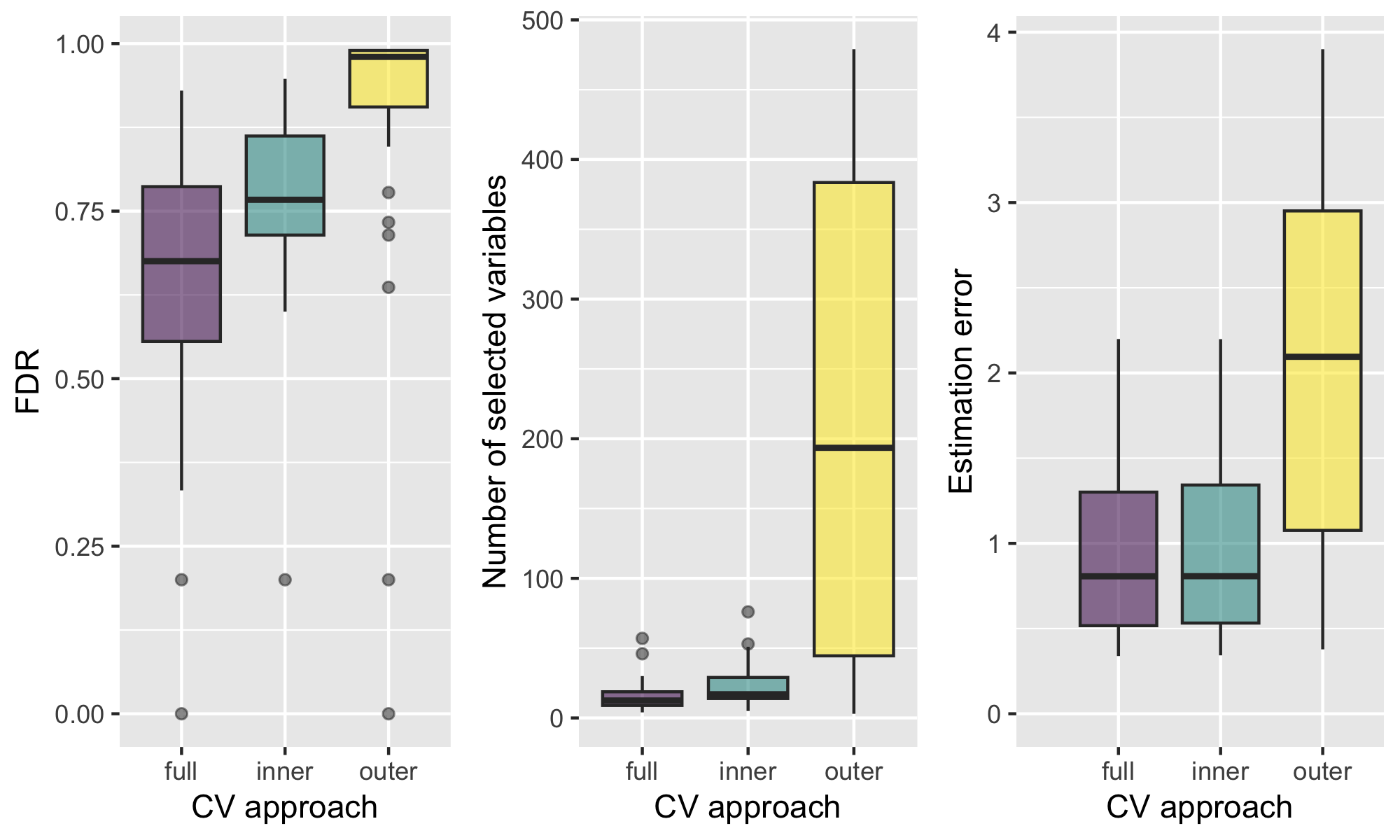}
    \caption{Large signal simulation ($|\beta| = |\gamma|$)  with 30 replications. Estimation was assessed using the root-squared estimation error ${||\hat \beta - \beta^* ||}$ (RSEE).}
    \label{fig:large-signal-sim}
\end{figure}

\begin{table}[H]
    \caption{Large signal simulation metrics}
    \fontsize{12.0pt}{14.4pt}\selectfont
    \begin{tabular*}{\linewidth}{@{\extracolsep{\fill}}lccc}
        \toprule
        & \textbf{Full} & \textbf{Inner} & \textbf{Outer} \\
        \midrule
        \addlinespace[2.5pt]
        \bfseries TDR & 0.98 (0.08) & 0.98 (0.08) & 0.98 (0.08) \\ 
        \bfseries FDR & 0.64 (0.21) & 0.77 (0.14) & 0.88 (0.23) \\ 
        \bfseries NVAR & 16 (12) & 23 (16) & 215 (172) \\ 
        \bfseries RSEE & 0.98 (0.56) & 0.96 (0.55) & 2.04 (1.12) \\
        \bottomrule
    \end{tabular*}
    \begin{minipage}{\linewidth}
        Format: Mean (SD); N. simulation replicates = 30
    \end{minipage}
    \label{tab:large_scale_metrics}
\end{table}

\subsubsection{Small signal setting} \label{sec:sim-b}

Table \ref{tab:small_scale_metrics} and Figure \ref{fig:small-signal-sim} show results from the simulations where the confounding was of greater magnitude than the signal. The FDR and model size results from this setting show an even more pronounced problem with the outer CV approach, which chose over 600 variables on average and maintained an FDR of about 0.99 in all replications. Regarding estimation error, we see that outer CV performs much worse than the other two approaches, just as we saw in the large signal setting. Here again, RSEE is comparable between the full and inner CV approaches; the distinguishing factor between full and inner CV is in the FDR and NVAR metrics. The TDR is a little lower across all of these approaches compared to the large signal case. 


\begin{figure}[H]
    \centering
    \includegraphics[width=0.9\linewidth]{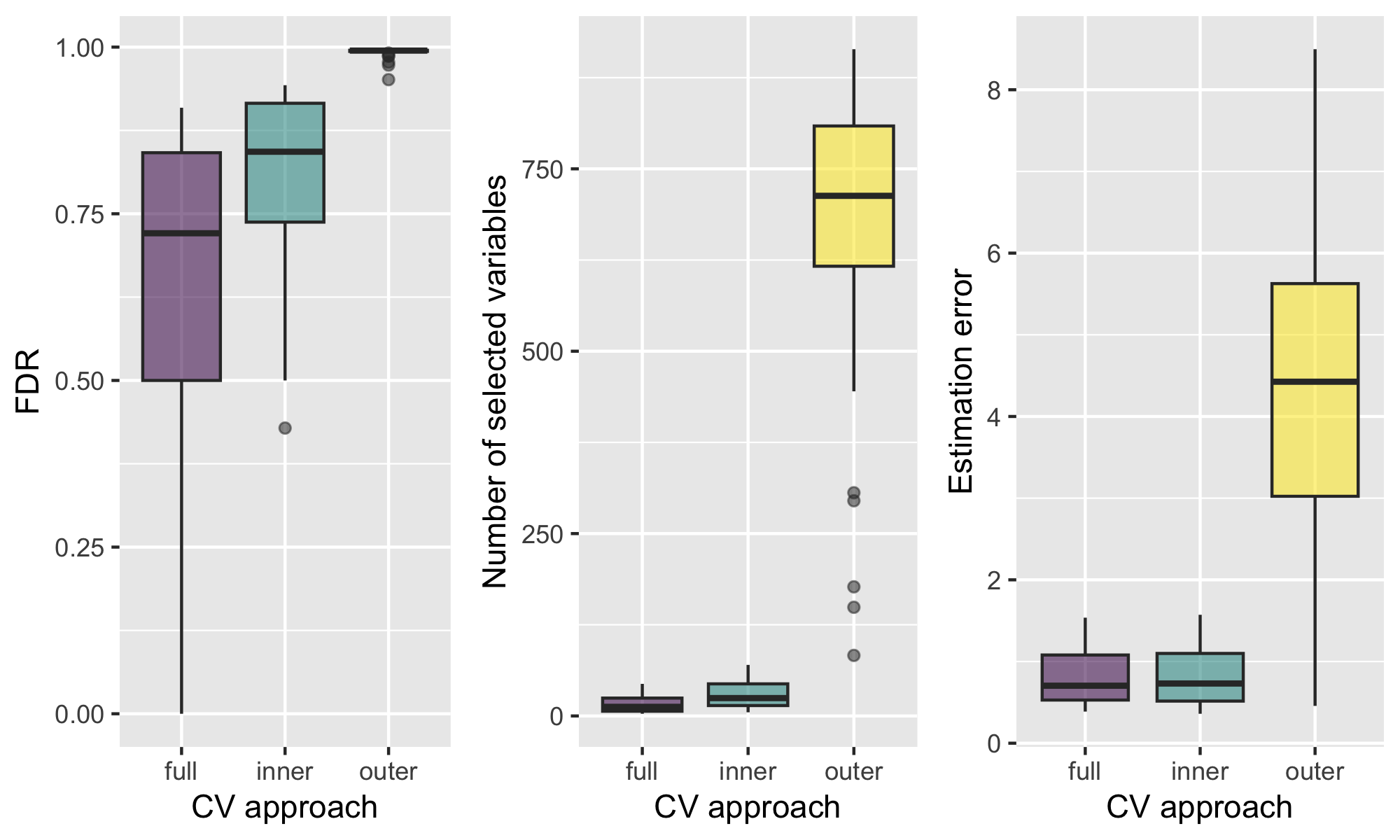}
    \caption{Small signal simulation ($|\beta| < |\gamma|$) with 30 replications. Again, estimation was assessed using the RSEE ${||\hat \beta - \beta^* ||}$.}
    \label{fig:small-signal-sim}
\end{figure}

\begin{table}[!t]
    \caption{Small signal simulation metrics}
    \fontsize{12.0pt}{14.4pt}\selectfont
    \begin{tabular*}{\linewidth}{@{\extracolsep{\fill}}lccc}
        \toprule
        & \textbf{Full} & \textbf{Inner} & \textbf{Outer} \\
        \midrule
        \addlinespace[2.5pt]
        \bfseries TDR & 0.92 (0.15) & 0.92 (0.15) & 0.95 (0.14) \\
        \bfseries FDR & 0.63 (0.25) & 0.80 (0.14) & 0.99 (0.01) \\ 
        \bfseries NVAR & 16 (11) & 29 (19) & 648 (234) \\
        \bfseries RSEE & 0.81 (0.38) & 0.84 (0.39) & 4.30 (1.91) \\
        \bottomrule
    \end{tabular*}
    \begin{minipage}{\linewidth}
        Format: Mean (SD); N. simulation replicates = 30
    \end{minipage}
    \label{tab:small_scale_metrics}
\end{table}

As a generalization across the results of both the small and large signal simulation settings, evidence from the given metrics shows that full CV is the best approach for selecting a model both in terms of having the lowest FDR, lowest prediction error, and most accurate estimation. 

\subsubsection{Examining true prediction error across CV approaches}\label{sec:examine_pred_err}

Another important aspect of evaluating a cross-validation approach is to examine whether the cross-validation error is in fact a good estimate of the true prediction error. In order to assess the true prediction error (that is, prediction error outside of CV) of the three candidate CV approaches, we created testing and training data sets using a partition of the observations in the data used for the above-described simulations. Figure \ref{fig:cve_v_mspe_sims} compares CVE and MSPE for each CV approach across both simulation settings, with reference lines (dotted) showing the slope for CVE = MSPE (the ideal). 

The plots for outer CV illustrate that this approach severely overestimated CVE in the large signal setting, and severely underestimated CVE in the small signal setting. We conclude from this evidence that the outer CV approach is unable to accurately estimate prediction error. Meanwhile, full and inner CV performed comparably in the large signal setting, with CVE and MSPE aligning well.  In the small signal case, CVE was a little lower for full and inner CV than their MSPE -- this reflects the difficulty of estimating error with accuracy when the confounding `drowns out' the relatively small signal. 

\begin{figure}[H]
    \centering
    \includegraphics[width=0.85\linewidth]{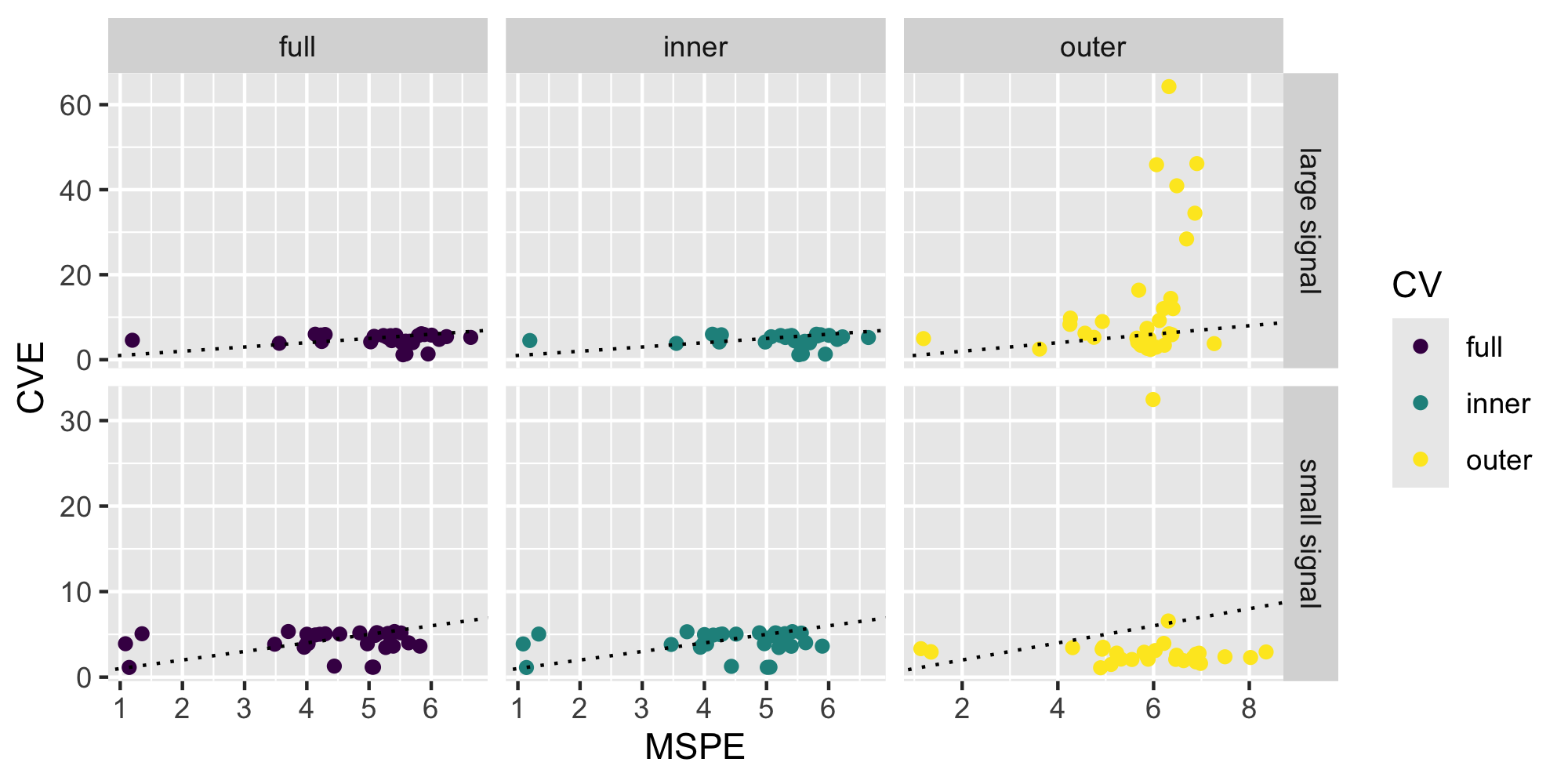}
    \caption{Comparison of cross-validation error (CVE) and mean-squared prediction error (MSPE) from semi-synthetic data. A set of 400 observations was held out as a test set, while the remaining observations were used to fit a model using each of the three candidate CV approaches. As in the simulations described above, we generated a synthetic outcome variable with one confounder. Using the models chosen by each CV approach, prediction error was assessed using the MSPE $||\y_{\text{test}} - \hat\y||^2/n$. We repeated this simulation for 30 replications.}
    \label{fig:cve_v_mspe_sims}
\end{figure}

\section{Real data analysis example}\label{sec:real-data}

In this section, we analyze a subset of the `PennCath' genetic data first mentioned in Section~\ref{sec:compare_cv_approaches}. Unlike the simulations, in this real data analysis the true correlation structure is unknown. We included sex, age, and 4,307 SNPs as predictors, with the presence/absence of coronary artery disease as the outcome (treated as a numeric value 0/1). 

We fit a model with each of the three CV approaches, as summarized in Table \ref{tab:cv-model-size} and in Figure \ref{fig:real_data_err_compare}. These results indicate that the outer CV approach suffers from severe overfitting, as is evidenced by the model size. Moreover, Figure \ref{fig:real_data_err_compare} highlights that the CVE calculated by outer CV is quite different from the true MSPE, indicating that there is a fundamental problem with outer CV as an approach.  

The comparison between inner and full CV is more nuanced, as these two methods performed comparably in terms of model size. Figure \ref{fig:real_data_err_compare} illustrates that while inner CV estimates MSPE accurately overall, it underestimates prediction error for larger values of $\lambda$. By contrast, full CV estimates prediction error accurately along the entire solution path, making it the only fully robust approach to cross-validation.

\begin{table}[H]
\caption{Comparison of model size from real data analysis}
\centering
\begin{tabular}[t]{lrrrrrr}
\toprule
\multicolumn{1}{c}{} & \multicolumn{2}{c}{\textbf{Min}} & \multicolumn{2}{c}{\textbf{1se}} \\
\cmidrule(l{3pt}r{3pt}){2-3} \cmidrule(l{3pt}r{3pt}){4-5}
\textbf{CV} & $\lambda_\text{min}$ & NVAR & $\lambda_\text{1se}$ & NVAR\\
\midrule
Full & 0.0437 & 5 &  0.0753 & 2 & \\
Inner & 0.0387 & 9 & 0.0667 & 2 & \\
Outer & 0.0105 & 556 & 0.0134 & 451 & \\
\bottomrule
\end{tabular}
\label{tab:cv-model-size}
\end{table}

\hspace{5em}

\begin{figure}[H]
    \centering
    \includegraphics[width=0.95\linewidth]{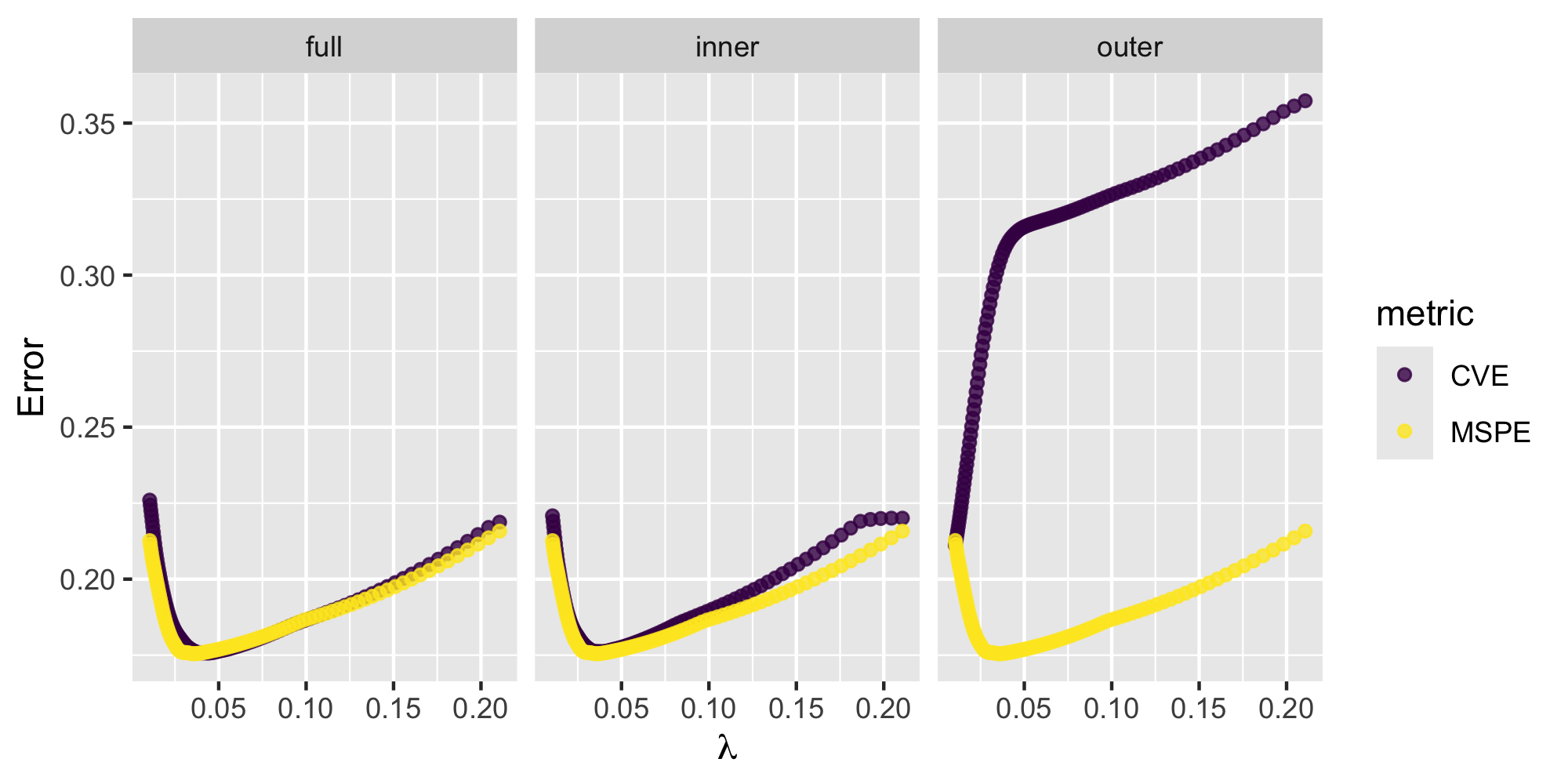}
    \caption{Comparison of CVE and MSPE from real data. 1,001 observations were used to select a model with each CV approach. The remaining 400 observations were held out of model fitting and were used to assess the prediction performance of the selected models. CVE and MSPE were measured in the same way as in Figure \ref{fig:cve_v_mspe_sims}.}
    \label{fig:real_data_err_compare}
\end{figure}

\newpage

\section{Discussion}\label{sec:discussion}

Exchangeability is a fundamental issue for any preconditioned modeling approach, of which penalized linear mixed modeling is a specific example. The outer rotation method fails because it violates exchangeability; that is, once the data $\X$ have been rotated to obtain $\rotX$, the rows of $\rotX$ are no longer exchangeable. Although the original data $\X$ has correlation between rows, each row contains the same amount of information.  \citet{Rabinowicz2022} have pointed out that this exchangeability between rows is the key to implementing CV with correlated data. However, when we precondition $\X$ by $\bS^{-1/2}$ this exchangeability no longer holds, as $\bS^{-1/2}$ weights the observations of $\X$. Typically, $\S$ represents the eigenvalues of $\K$ as shown in \eqref{eqn:sig-def}; after we precondition, the rows of $\rotX$ corresponding to the largest eigenvalue will contain the most information -- and some rows of $\rotX$ might have zero weight. Unlike the rows of $\X$, the rows of $\rotX$ have different amounts of information about the outcome. Therefore, preconditioning renders the rows of $\tilde{\X}$ inexchangeable. Our simulation studies and real data analysis show that this results in severe overfitting.


While inner CV avoids this exchangeability issue, subsetting the $\U$, $\S$, and $\hat\bS$ from the full dataset introduces data leakage \citep{Kaufman2012, Kapoor2023}. Unlike outer CV, inner CV produces mostly reasonable models. However, our investigations using simulated and real data show that it does not always estimate true prediction error accurately, leading to an inflated false discovery rate.

In conclusion, the only fully robust approach to model selection is full cross-validation. This result is of particular importance given that without dedicated software for fitting preconditioned models (including penalized linear mixed models), outer CV is exactly what analysts using off-the-shelf software would be doing unless they programmed the entire CV procedure themselves. Our \textbf{plmmr} package \citep{plmmr} ensures that users carry out full CV for penalized linear mixed models in a manner that is both computationally efficient and methodologically sound.


\bibliographystyle{ims-nourl}

\section*{Appendix} 

The code used to generate data for the simulation in Section \ref{sec:blup} was as follows: 

\begin{verbatim}
#' A function to simulate correlated data
#'
#' @param n An integer with the number of observations
#' @param p An integer with the number of features
#' @param s An integer with the number of signals
#' @param gamma A number representing the bound of the magnitude of the confounding
#' @param beta A number indicating the magnitude of the signal
#' @param B An integer indicating the number of batches
#' @param dat
#'
#' @returns A list with:
#' * y (the outcome vector)
#' * X (the design matrix)
#' * Z (the matrix with the true grouping structure of the batch membership)
#' * gamma (the gamma vector)
#' * id (the vector indicating batch assignments)
#' @export
#'
generate_correlated_data <- function(n=100, p=256, s=4, gamma=6, beta=2, B=20) {
    mu <- matrix(rnorm(B*p), B, p)
    z <- rep(1:B, each=n/B)
    X <- matrix(rnorm(n*p), n, p) + mu[z,]
    b <- rep(c(beta, 0), c(s, p-s))
    g <- seq(-gamma, gamma, length=B)
    y <- X %*% b + g[z]
    Z <- model.matrix(~0+factor(z))
    list(y=y, X=X, beta=b, Z=Z, gamma=g, mu=mu, id=z)
}
\end{verbatim}

\end{document}